\newcommand{\rmnum}[1]{\romannumeral #1}
\newcommand{\Rmnum}[1]{\expandafter\@slowromancap\romannumeral #1@}
\newtheorem{definition}{Definition}[section]
\newtheorem{lemma}{Lemma}[section]
\newtheorem{theorem}{Theorem}[section]
\title{Nash Equilibria of Two-round Auctions}
\author{Chulong Zhong$^1$, Xiang Yan$^2$, Yuyi Wang$^3$, Shuangping Huang$^1$, Jin Zhong$^1$\\
{$^1$South China University of Technology}\\
{$^2$Huawei Taylor Lab}\\
{$^3$CRRC $\lambda$ Lab}\\
zhongchulong7013@gmail.com
}
\date{}
\begin{document}

\maketitle

\begin{abstract}
In a two-round auction, a subset of bidders is selected (probabilistically), according to their bids in the first round, for the second round, where they can increase their bids. 
We formalize the two-round auction model, restricting the second round to a dominant strategy incentive compatible (DSIC) auction for the selected bidders. 
It turns out that, however, such two-round auctions are not directly DSIC, even if the probability of each bidder being selected for the second round is monotonic to its first bid, which is surprisingly counter-intuitive. 
We also illustrate the necessary and sufficient conditions of two-round auctions being DSIC. 
Besides, we characterize the Nash equilibria for untruthful two-round auctions. One can achieve better revenue performance by setting proper probability for selecting bidders for the second round compared with truthful two-round auctions.
\end{abstract}

\section{Introduction}
Auction design has long been an important problem in computational economics, as its wide applications in sponsored search, resource allocation, and blockchain.

In the history of auction theory, finding dominant strategy incentive compatible (DSIC) mechanisms is usually the main goal for auction designers.
This is because the DSIC condition assures that each bidder in the auction achieves its optimal utility through truthfully bidding, no matter what the other bidders do.
Moreover, Myerson's famous work \cite{ref_article3} has already provided an important characterization for the DSIC auctions of single items.

A recent observation on auctions happening in the industry is that there is usually more than one round in each auction. 
Precisely, bidders participating in the auction are asked to report their bids at the beginning. After eliminating some of the bidders by pre-specified rules, the remaining bidders are allowed to update their bids. 
Naturally, only increasing bids are allowed.
This leads to an interesting question: Whether a two-round auction is equivalent to a single-round auction in any sense?

One may conjecture the equivalence between the two-round auctions and the single-round auctions as long as the second rounds of the former one are exactly some DSIC auctions for a subset of bidders. 
However, this is not always true.
In this paper, we formalize the theoretical model for two-round auctions and derive the sufficient and necessary condition for a two-round auction being DSIC, provided its second round forms a single-round DSIC auction.

Furthermore, bidders in multi-round auctions are no longer in a single-parameter environment since the winning set or the final allocation of the single item depends on the multi-dimensional bids in each round. 
Thus, the revenue-maximization auction is not necessarily provided by Myerson's lemma. 
In this paper, we also try to characterize the Nash equilibria when a two-round auction is not DSIC.
We find that through applying tricks of multiple rounds, especially by properly setting the probability for selecting bidders to participate in the second-round auction, more revenue can be obtained under the corresponding unique Nash equilibrium, compared with the revenue-maximized single-round auction. 

\subsection{Related Work}
Efficiency improvements have been associated with a multi-round mechanism.
Perry and Reny~\cite{ref_article8,ref_book4} implement a two-round trick on the foundation of Vickrey's auction, compensating for the lack of efficiency of Vickrey's ex-post efficient multi-unit auction on the interdependent private valuation.
Rolfe~\cite{ref_article12} conducted an information-limited auction experiment for landholders in Australia, opposite the English auction.
The results revealed that landholders could gain information about the suitability of their proposals, whether their valuation is interdependent or not, with the aid of a multi-round mechanism and have an efficiency increment, particularly in the second round, with mild improvement between the second round and the third round.
However, there has not been any work formalizing the two-round auction model and providing algorithmic game theoretical analysis.

The multi-round mechanism could provoke greater revenue as well. 
Oral ascending auctions have a complete information giveaway~\cite{ref_article16}, like bids and bidder identities, to bidders, while multi-round auctions with interdependent values only have partial information, like only bids, told to the bidders which can avoid the collision caused by two blade effects of information sharing.
List~\cite{ref_article17} examines panel data and finds multi-round tricks will enhance the median bid for naive bidders, defined as offering only price information after each round, in a sealed-bid second price auction.
Katzman~\cite{ref_article9} raises an effective two-round sequential second price auction in 2-unit auctions generating equivalent expected revenue with Vickrey's auction, in which different items are sold in different rounds.
Also, some researchers summarize that multi-round auction promotes information sharing~\cite{ref_book5} and maximizes the total selling price in package auction~\cite{ref_article13}, in which multiple items segmented into fewer packages will be sold by the package. 
Cybernomics~\cite{ref_article14} found that simultaneous multi-round(SMR) auction used by FCC from 1995 creates higher revenue than CRA combinatorial auction in a multi-unit auction, sacrificing the efficiency to some extent.
Our observation for better revenue performance in untruthful two-round auctions than single-round auctions may be the first theoretical evidence.

\subsection{Organization} In Section \ref{model}, we introduce a model for two-round auctions, including necessary notations and preliminary. 
Section \ref{proof} proves the sufficient and necessary condition for a two-round auction being DSIC, summarized in Theorem \ref{lemma:two round dsic}. In Section \ref{ne condition}, we try to characterize the Nash equilibria if the auctions do not satisfy the DSIC condition. We focus on two simple types of auctions and find better revenue performance compared to the single-round model. 

\section{Two-round Auctions}
\label{model}

We consider a two-round auction with $n$ bidders $N=\{1,2,\dots,n\}$, competing for one single item. 
Each bidder $i$ has a private value for the item, denoted by $v_i$.
W.L.O.G., we assume $v_1\geq v_2\geq \dots \geq v_n$.
A two-round auction mechanism is denoted by $\mathcal{M}(\theta, \alpha, \boldsymbol{x}, \boldsymbol{y}, \boldsymbol{p})$, precisely:

\begin{itemize}
    \item In the first round, bidders bid $\boldsymbol{b'}=\{b'_1,\dots,b'_n\}$. 
    We re-order the index in $\boldsymbol{b'}$ and obtain $\boldsymbol{b}$ such that $b_1\geq b_2\dots\geq b_n$.
    Then $\alpha$ bidders are selected from those with the highest $\theta$ bids ($b_1,\dots,b_{\theta}$) to participate the second round, according to rule $\boldsymbol{y}=\{y_1,\dots,y_{\theta}\}$.
    Namely, the marginal probability of the bidder with bid $b_i$ being selected is $y_i$ (discussed later).
    \item In the second round, bidders are allowed to update their bids $\boldsymbol{s'}=\{s'_1,\dots,s'_{n}\}$ such that $s'_i\geq b'_i$ if bidder $i$ is selected to the second round. 
    (Otherwise, let $s'_i=0$ for completeness.) 
    Similarly, we re-order the index to obtain $\boldsymbol{s}$ such that $s_1\geq \dots\geq s_{\alpha}$.
    Then the winner of the item is determined by the allocation rule $\boldsymbol{x}(\boldsymbol{s})$ and its payment by $\boldsymbol{p}(\boldsymbol{s})$.
    Specifically, let $\beta$ be the number of bidders with a positive probability of allocation in the second round, i.e., $\beta=|\{i|x_i>0\}|$.
    \item Assume each bidder has quasi-linear utility, that is, its private value minus payment if it wins the item and zero otherwise. 
\end{itemize}

\begin{table*}
  \caption{Intuitive explanation of important mathematical notations}
  \label{tab:notation}
  \resizebox{\linewidth}{!}{
  \begin{tabular}{c|l}
  \hline
  $\theta$&the number of bidders who has a chance to be chosen into the second round\\
  \hline
  $\alpha$&the number of bidders who is actually  chosen into the second round\\
  \hline
  $\beta$&the number of bidders who has a chance to be chosen to be final winner in the second round\\
  \hline
  $y_i$&the probability of i-th highest bidder to enter the second round in the first round auction\\
  \hline
  $Y_i$&the probability of every $C_{\theta}^{\alpha}$ draw result, in which $\alpha$ winners are drawn from the top $\theta$ bidders\\
  \hline
  $x_j$&the probability of j-th highest bidder to be the final winner in the second round auction\\
  \hline
  $u(\cdot)$&the expected utility in the whole two-round auction\\
  \hline
  $u_2(\cdot)$&the utility in the second round auction\\
  \hline
\end{tabular}}
\end{table*}

\paragraph{Remark.}
Some important notations are explained in Table~\ref{tab:notation} intuitively.
We use the variables $\theta,\alpha,\beta$ and introduce the probability into the auction and get a more general model.
It's readily comprehensible that we let $\theta\ge\alpha\ge\beta\ge 1$.
Narrowly, if $\theta=\alpha,\beta=1$ the model degenerates into a common auction model without uncertainty totally determined by the bids. 
The model described in~\cite{ref_article10} corresponds to $\theta=\alpha=2,\beta=1$, and it is proved to have equivalent expected revenue with ascending open-bid auctions. 
On the other hand, our model can be applied to include reserve price $r_1$ for the first round and $r_2$ for the second round, where $r_1\le r_2$.

\subsection{Design Space for the First Round}

We consider the situation where the auctioneer decides the joint probability for selecting $\alpha$ bidders from the top $\theta$ bids in the first round.
This corresponds to $C(\theta,\alpha)=\frac{\theta!}{\alpha!(\theta-\alpha)!}$ possible combinations, denoted by $\mathbb{L}_\theta^\alpha(\cdot)$.
Precisely, for index set $\{1,2,\cdots,\theta\}$, $\mathbb{L}_\theta^\alpha(1,2,\cdots,\theta)$ ($\mathbb{L}_\theta^\alpha$ if there is not ambiguity) represents all its subsets of size $\alpha$.

In a two-round auction mechanism, the auctioneer should determine the probability $P(L^{(\alpha)})$ for each $L^{(\alpha)}\in \mathbb{L}_\theta^\alpha$.
Additionally, for each $L^{(\alpha)}\in \mathbb{L}_\theta^\alpha$, denote $\boldsymbol{s}_{L^{(\alpha)}}$ as the bidders' bids in the second round.
For simplicity, we focus on the marginal probability, that is, for bidder with the $i$-th highest bid, the probability it is selected to the second round is
$$y_i=\displaystyle{\frac{\sum_{L^{(\alpha-1)}\in\mathbb{L}_{\theta-1}^{\alpha-1}(\cdots i-1,i+1\cdots)}P(i,L^{(\alpha-1)})}{\sum_{L^{(\alpha)}\in\mathbb{L}_\theta^\alpha(\cdots i-1,i+1\cdots)}P(L^{(\alpha)})}},$$
and easily $\sum_{i=1}^\theta y_i=\alpha$.

On the other hand, on behalf of a single bidder, we re-use the notations $y(b,\boldsymbol{b}_{-i})$ to define the probability the bidder with bid $b$ in the first round is selected to the second round, for fixed other $n-1$ bids $\boldsymbol{b}_{-i}$.

\subsection{Preliminary}

For bidders participating in the second round, they are exactly participating in a single-round auction with a personalized reserve price, their bids in the first round.
Thus, we focus on those two-round auctions whose second rounds are DSIC and analyze whether the design for the first round matters.

\begin{definition}[DSIC]
An auction is dominant strategy incentive compatible (DSIC) if for each bidder, bidding truthfully optimizes its utility no matter what other bidders bid.
\end{definition}

\begin{definition}[Myerson's Lemma] A single round auction $(x(\boldsymbol{s'}),p(\boldsymbol{s'}))$ is DSIC if and only if, for any $s_\xi'$ and fixed $\boldsymbol{s'_{-\xi}},$\\
1. the allocation rule $x(s_\xi',\boldsymbol{s'_{-\xi}})$ is monotone non-decreasing on $s_\xi'$.\\
2. the payment rule accords with $p_\xi(\mathbf{s}^{(\alpha)},\mathbf{x}^{(\alpha)})=\int_{0}^{s_\xi'} z\cdot\frac{d}{dz}x(s_\xi',\boldsymbol{s'_{-\xi}}) dz$.
\end{definition}

Note that if the allocation rule takes the value of discrete constants, Myerson's lemma has its discrete form. 
It has monotone allocation rule $x_\beta\le\cdots\le x_3\le x_2\le x_1$ and payment $p_\xi(\mathbf{s}^{(\alpha)},\mathbf{x}^{(\alpha)})=\sum\limits_{j=\xi}^{\beta}s_{j+1}(x_j-x_{j+1})$.

\section{DSIC condition for two-round auctions}
\label{proof}

In this section, we prove the necessary and sufficient condition for any two-round auction being DSIC if its second round itself is DSIC.

\begin{theorem}
\label{lemma:two round dsic}
Consider a two-round auction $\mathcal{M}(\theta, \alpha, \boldsymbol{x}, \boldsymbol{y}, \boldsymbol{p})$.
Assume the allocation and payment rule of its second round $(\boldsymbol{x}, \boldsymbol{p})$ forms a DSIC auction for $\alpha$ bidders.
Then the two-round auction is also DSIC, if and only if:
\begin{enumerate}
    \item $y(b,\boldsymbol{b}_{-i})\le y(b_{\theta-\alpha+\beta},\boldsymbol{b}_{-i})$ for $b<b_{\theta-\alpha+\beta}$
    \item $y(b,\boldsymbol{b}_{-i})= y(b_{\theta-\alpha+\beta},\boldsymbol{b}_{-i})$ for $b\ge b_{\theta-\alpha+\beta}$,
\end{enumerate}
where $\beta=|\{i|x_i>0\}|$. For a discrete form, we have,
\begin{enumerate}
    \item $y_i\le y_{\theta-\alpha+\beta}$ for $i>\theta-\alpha+\beta$,
    \item $y_1=y_2=\cdots=y_{\theta-\alpha+\beta}$,
\end{enumerate}

\end{theorem}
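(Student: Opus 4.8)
The plan is to fix a single bidder $i$ with true value $v$, freeze the other bidders' first-round bids $\boldsymbol{b}_{-i}$ together with their (arbitrary) continuation strategies, and show that under conditions 1--2 reporting $b=v$ in the first round and then playing truthfully in the second round maximizes bidder $i$'s expected utility $u(\cdot)$, while violating either condition creates a profitable deviation. The enabling reduction is that the second round is DSIC: once bidder $i$ is selected, its first-round bid $b$ acts only as a personalized reserve, so its optimal second-round bid is $\max(b,v)$, i.e.\ it bids $v$ truthfully whenever $b\le v$ and is forced to overbid to $b$ whenever $b>v$. This splits the analysis into the under-/truthful regime $b\le v$, where the reserve is slack, and the overbidding regime $b>v$, where the reserve binds.

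The geometric heart of the argument is explaining why the rank $\theta-\alpha+\beta$ is the relevant threshold. I would first prove a counting lemma: a bidder sitting at first-round rank $r$ that is selected is co-selected with at least $\max(0,\alpha-1-(\theta-r))$ higher-ranked candidates, since only $\theta-r$ lower-ranked candidates are available to fill the remaining $\alpha-1$ seats. Hence for $r\le\theta-\alpha+\beta$ the bidder can reach the top-$\beta$ second-round bids (the only slots with $x_j>0$) and is \emph{win-eligible}, whereas for $r>\theta-\alpha+\beta$ at least $\beta$ higher-ranked bidders are always co-selected. Combined with the discrete Myerson payment $p_\xi=\sum_{j=\xi}^{\beta}s_{j+1}(x_j-x_{j+1})$ of the second round, this pins down how the \emph{combined} win probability and payment depend on $b$: for $b\le v$ they factor through the selection probability $y(b,\boldsymbol{b}_{-i})$ and a second-round surplus that does not increase as $b$ drops below the threshold, and for $b>v$ the binding reserve can only shrink that surplus.

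For sufficiency I would rule out every deviation. For $b<v$, condition 1 gives $y(b,\boldsymbol{b}_{-i})\le y(b_{\theta-\alpha+\beta},\boldsymbol{b}_{-i})$ and, when $v$ already lies among the win-eligible ranks, condition 2 forces equality up to $v$; since the second-round surplus is non-negative and weakly worsens once the bidder drops past the threshold, lowering $b$ cannot raise $u$. For $b>v$, condition 2 guarantees $y$ is already maximal, so overbidding buys no extra selection probability, while the second-round DSIC inequality applied at reserve $b>v$ shows the forced overbid weakly lowers the conditional surplus; hence $u(b)\le u(v)$. For necessity I would contrapose. If condition 2 fails, two first-round bids both at or above the threshold give identical win-eligibility and second-round surplus but different $y$, so the bidder strictly prefers the larger one. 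If condition 1 fails, some $b$ below the threshold has $y(b,\boldsymbol{b}_{-i})>y(b_{\theta-\alpha+\beta},\boldsymbol{b}_{-i})$; then for a value just above the threshold and a suitable (weak) competitor continuation the bidder can shade down to $b$, capturing the higher selection probability while keeping a positive win probability, which strictly profits and breaks dominant-strategy optimality. The discrete statement follows by specializing $b$ to the order statistics of $\boldsymbol{b}_{-i}$.

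The step I expect to be the main obstacle is controlling the coupling between the two rounds, since the first-round bid simultaneously sets the second-round reserve, shifts the bidder's rank (hence the conditional law of which competitors are co-selected), and moves $y$. Disentangling these effects --- in particular proving rigorously, for \emph{arbitrary} competitor continuation strategies rather than truthful ones, that win-eligibility saturates exactly at rank $\theta-\alpha+\beta$ and that overbidding past $v$ can never generate positive net surplus --- is where the delicate case analysis and the precise use of the discrete Myerson payment will be concentrated.
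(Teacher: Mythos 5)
Your proposal is correct and follows essentially the same route as the paper: your counting lemma (win-eligibility saturating at rank $\theta-\alpha+\beta$) is exactly the paper's ``Line of competition'' lemma ($s_\beta\ge b_{\theta-\alpha+\beta}$), your factorization of utility into selection probability times second-round surplus is the paper's $u = y\cdot u_2$ decomposition, and your sufficiency case split ($v$ above/below the threshold, using second-round DSIC) and necessity constructions (deviating to the bid with larger $y$ while placing competitors' second-round bids so the surplus is unchanged) match the paper's argument step for step.
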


Theorem \ref{lemma:two round dsic} assures that the probability to enter the second round remains equal before the $\theta-\alpha+\beta$ order and it has less probability when the bidders order after $\theta-\alpha+\beta$.
As shown in Fig~\ref{fig:two round dsic condition}, for $\theta-\alpha+\beta=3$, the DSIC condition is that $y_1=y_2=y_3>y_4,\sum{y_i}=3$, because $y_i$ means the chance to enter the second round for one person and there are 3 persons chosen into the the second round.
Note that in the first condition, it is not required that the probability to the second round must be monotone for bidders ranked after $\theta-\alpha+\beta$.
In fact, only bidders with the highest $\theta-\alpha+\beta$ in the first round have a chance to win in the second round, which is shown in the following lemma.
Therefore, the second condition prevents the $\theta-\alpha+\beta$ competitors' bids riskily to obtain more chance to enter the second round to raise their expected utility, which violates the DSIC definition.
Typically, we focus on the $\theta>\alpha\ge\beta$ because when $\theta=\alpha$, there is no uncertainty in the first round and we can conclude $y(b,\boldsymbol{b}_{-i})=1,b>b_{\theta}$, which absolutely satisfies Theorem~\ref{lemma:two round dsic}.
So when $\theta=\alpha$, the two-round auction is DSIC and equivalent to the single-round auction concerning social surplus and revenue.
\begin{figure}[ht]
  \centering
  \includegraphics[width=0.3\linewidth]{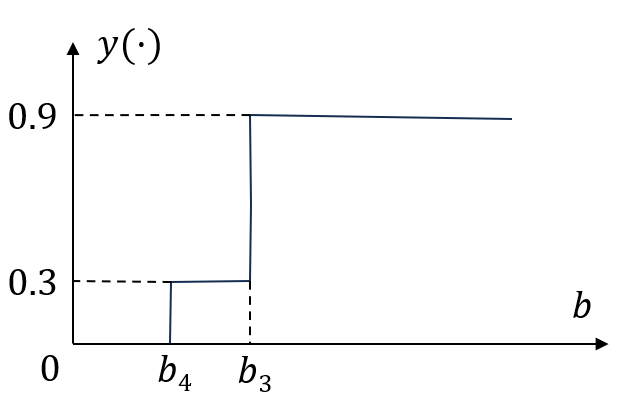}
  \caption{the example $y(\cdot)$ for DSIC condition of two-round auction when $\theta=4,\alpha=3,\beta=2$}
  \label{fig:two round dsic condition}
\end{figure}

\begin{lemma}[Line of competition]
    \label{competion line}
        $s_\beta\ge b_{\theta-\alpha+\beta}$
\end{lemma}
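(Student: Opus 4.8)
The plan is to prove the bound by an elementary counting (pigeonhole) argument combined with the monotonicity of the bid-update rule $s'_i\ge b'_i$. The guiding intuition is that the $\alpha$ bidders promoted to the second round are drawn from the top $\theta$ first-round bidders, and at most $\alpha-\beta$ of them can come from first-round ranks strictly below $\theta-\alpha+\beta$; hence at least $\beta$ of the promoted bidders already hold a first-round bid of at least $b_{\theta-\alpha+\beta}$, and since a selected bidder may only raise its bid, each such value survives into the second round still at least $b_{\theta-\alpha+\beta}$.

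First I would index every promoted bidder by its first-round rank and let $S\subseteq\{1,2,\dots,\theta\}$ be the set of these ranks, so that $|S|=\alpha$. I split the top $\theta$ positions into a high block $H=\{1,\dots,\theta-\alpha+\beta\}$ and a low block $L=\{\theta-\alpha+\beta+1,\dots,\theta\}$, where $|L|=\alpha-\beta$. Since $S\subseteq H\cup L$ and at most $|L|=\alpha-\beta$ members of $S$ can lie in $L$, a pigeonhole count gives $|S\cap H|\ge\alpha-(\alpha-\beta)=\beta$.

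Next I would transfer this count into the second round. Any bidder whose first-round rank $k$ lies in $H$ has first-round bid $b_k\ge b_{\theta-\alpha+\beta}$, because the first-round bids are sorted in non-increasing order; moreover, being selected, its second-round bid is at least its own first-round bid by the rule $s'_i\ge b'_i$, hence at least $b_{\theta-\alpha+\beta}$. Therefore, among the $\alpha$ sorted second-round bids $s_1\ge\cdots\ge s_\alpha$ there are at least $\beta$ entries of value $\ge b_{\theta-\alpha+\beta}$, which forces the $\beta$-th largest one to satisfy $s_\beta\ge b_{\theta-\alpha+\beta}$. Because this argument holds for every admissible selection set $S$ of size $\alpha$, the inequality holds deterministically for every realization of the first round.

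I do not expect a genuine obstacle here, as the argument is essentially combinatorial. The only points demanding care are the index arithmetic, namely verifying that $L$ has exactly $\alpha-\beta$ elements so that the pigeonhole yields precisely $\beta$ high-block survivors, and the order of operations, namely applying the monotonicity $s'_i\ge b'_i$ to each bidder individually (through its own first-round rank) before the bids are re-sorted into the $s_1\ge\cdots\ge s_\alpha$ sequence.
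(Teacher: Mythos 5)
Your proof is correct and rests on the same combinatorial observation as the paper's: the $\alpha$ selected bidders all come from the top $\theta$ first-round ranks, so at most $\alpha-\beta$ of them can rank strictly below position $\theta-\alpha+\beta$, and selected bidders can only raise their bids. The paper formalizes this by comparing an arbitrary selection element-wise against the worst-case selection $L^{(\alpha)}_{min}=(\theta,\theta-1,\cdots,\theta-\alpha+1)$, which yields the full chain $s_\gamma\ge b_{\theta-\alpha+\gamma}$ for every $\gamma$, whereas your pigeonhole count isolates exactly the single index $\beta$ the lemma requires; both renderings are valid and essentially interchangeable.
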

\begin{proof}
Let $L^{(\alpha)}_{min}\triangleq(\theta,\theta-1,\cdots,\theta-\alpha+1)\in\mathbb{L}^{\alpha}_{\theta}$, which corresponds to the case that bidders with the smallest bids selected to the second round.
Then $\forall$ $L^{(\alpha)}\in\mathbb{L}^{\alpha}_{\theta}$, $\forall$ $i\in L^{(\alpha)}_{min}$, $\exists$ $j\in L^{(\alpha)}$, $s.t.$ $b_{i}\le b_{j}$.

The bids defined by $L^{(\alpha)}_{min}$ can be regarded as \\
\centerline{$\{b_{\theta-\alpha+\alpha},b_{\theta-\alpha+(\alpha-1)},\cdots,b_{\theta-\alpha+1}\}$,}
while the bids in the second round are \\
\centerline{$\{s_{\alpha},s_{\alpha-1},\cdots,s_1\}$.}

Then we can get \\
\centerline{$s_1\ge b_{\theta-\alpha+1},s_2\ge b_{\theta-\alpha+2},\cdots,s_{\alpha}\ge b_{\theta-\alpha+\alpha}$.}
That is, for $\gamma\le\beta$, $s_\gamma\ge b_{\theta-\alpha+\gamma}$, and specifically, $s_\beta\ge b_{\theta-\alpha+\beta}$.
    
\end{proof}

Now we prove Theorem \ref{lemma:two round dsic}.
For bidder with private value $v$ and any other $n-1$ bidders' bids $\boldsymbol{b}_{-i}$ and $\boldsymbol{s}_{-j}$ fixed, let $u(\cdot)$ denote its utility in the whole two-round auction. 
Formally, when bidding truthfully, i.e. $b=s=v$, $u(v,\boldsymbol{b}_{-i},v,\boldsymbol{s}_{-j})=y(v,\boldsymbol{b}_{-i})\cdot u_2(v,\boldsymbol{s}_{-j})$, where $u_2(s,\boldsymbol{s}_{-j})$ is the utility function when the bidder participates the auction defined by the second round $(\boldsymbol{x}, \boldsymbol{p})$.
Consider two kinds of bidding methods:
\begin{enumerate}
\item Conservative: $b\le v$, but $s<v,s=v,s>v$ are all possible;
\item Risky: $v<b\le s$.
\end{enumerate}

We will prove conditions 1 and 2 are sufficient and necessary conditions for $u(v,\boldsymbol{b}_{-i},v,\boldsymbol{s}_{-j})\ge u(b,\boldsymbol{b}_{-i},s,\boldsymbol{s}_{-j})$.

\subsection{Sufficiency}

Since the mechanism $(\boldsymbol{x}, \boldsymbol{p})$ is DSIC, we know that for any $s$, $u_2(v,\boldsymbol{s}_{-j})\ge u_2(s,\boldsymbol{s}_{-j})$.
If $v>s_\beta\ge b_{\theta-\alpha+\beta}$,

under the condition of Theorem \ref{lemma:two round dsic}, for any $b$,
$$
y(v,\boldsymbol{b}_{-i})\ge y(b,\boldsymbol{b}_{-i}).
$$

The equality holds only when $b\ge b_{\theta-\alpha+\beta}$. 
Then $u(v,\boldsymbol{b}_{-i},v,\boldsymbol{s}_{-j})\ge u(b,b_{-i},s,s_{-j})$.

If $v\le s_\beta$, the bidder will not get the positive utility, i.e. $u_2(v,\boldsymbol{s}_{-j})\leq 0$.
Then for any $b$,
$$
u(b,\boldsymbol{b}_{-i},s,\boldsymbol{s}_{-j})=y(b,\boldsymbol{b}_{-i})\cdot u_2(s,\boldsymbol{s}_{-j})\le 0
$$
$$
u(v,\boldsymbol{b}_{-i},v,\boldsymbol{s}_{-j})=y(v,\boldsymbol{b}_{-i})\cdot u_2(v,\boldsymbol{s}_{-j})=0
$$
Thus, whether by conservative bidding $b\ge v$ or risky bidding $s\ge b>v$, $\quad u(v,\boldsymbol{b}_{-i},v,\boldsymbol{s}_{-j})\ge u(b,\boldsymbol{b}_{-i},s,\boldsymbol{s}_{-j})$.

\subsection{Necessity} Consider any probability function $y(\cdot)$ that does not satisfy condition in Theorem \ref{lemma:two round dsic}, we show there exist $v$, $\boldsymbol{b}_{-i}$, $\boldsymbol{s}_{-j}$, and $b$ or $s$, such that $u(b,\boldsymbol{b}_{-i},s,\boldsymbol{s}_{-j})>u(v,\boldsymbol{b}_{-i},v,\boldsymbol{s}_{-j})$.\\

\paragraph{Condition 2 is necessary.}
In conservative condition which is $b\le v$, if there exists $\epsilon_2 > \epsilon_1 \geq b_{\theta-\alpha+\beta}$, such that $y(\epsilon_1,\boldsymbol{b}_{-i}) > y(\epsilon_2,\boldsymbol{b}_{-i})$, (in discrete form, this means $y_i>y_j$ while $i>j$,) then we can let $b=\epsilon_1, s=v=\epsilon_2$, which means 
\begin{eqnarray}
u(b,\boldsymbol{b}_{-i},s,\boldsymbol{s}_{-j})&=&y(b,\boldsymbol{b}_{-i})\cdot u_2(s,\boldsymbol{s}_{-j})\notag\\
    &>&y(v,\boldsymbol{b}_{-i})\cdot u_2(v,\boldsymbol{s}_{-j})=u(v,\boldsymbol{b}_{-i},v,\boldsymbol{s}_{-j}). \notag
\end{eqnarray}

In risky condition which is $b>v$,if there exists $\epsilon_2 > \epsilon_1 \geq b_{\theta-\alpha+\beta}$, such that $y(\epsilon_1,\boldsymbol{b}_{-i}) < y(\epsilon_2,\boldsymbol{b}_{-i})$, (in discrete form, this means $y_i<y_j$ while $i>j$,) we can also construct $\boldsymbol{s}_{-j}$ such that $(\epsilon_1,\epsilon_2)\subset(s_\gamma,s_{\gamma-1})$ for some $\gamma \leq \beta$.
Then we can let $b=s=\epsilon_2,v=\epsilon_1$, which means $u_2(s,\boldsymbol{s}_{-j})=u_2(v,\boldsymbol{s}_{-j})$ (because otherwise, the second round is not DSIC), while $y(b,\boldsymbol{b}_{-i})>y(v,\boldsymbol{b}_{-i})$.
To sum up, $u(b,\boldsymbol{b}_{-i},s,\boldsymbol{s}_{-j}) > u(v,\boldsymbol{b}_{-i},v,\boldsymbol{s}_{-j})$ when $b>v$.

\paragraph{Under Condition 2, Condition 1 is also necessary.}
If there exists $\epsilon_3 < b_{\theta-\alpha+\beta}$, such that $y(\epsilon_3,\boldsymbol{b}_{-i}) > y(b_{\theta-\alpha+\beta},\boldsymbol{b}_{-i})$, (in discrete form, this means $y_i>y_{\theta-\alpha+\beta}$ while $i>\theta-\alpha+\beta$,) then we can let $b=\epsilon_3, s=v=b_{\theta-\alpha+\beta}$, which means
\begin{eqnarray}
u(b,\boldsymbol{b}_{-i},s,\boldsymbol{s}_{-j})&=&y(b,\boldsymbol{b}_{-i})\cdot u_2(s,\boldsymbol{s}_{-j})\notag\\
    &>&y(v,\boldsymbol{b}_{-i})\cdot u_2(v,\boldsymbol{s}_{-j})=u(v,\boldsymbol{b}_{-i},v,\boldsymbol{s}_{-j})\notag
\end{eqnarray}

This completes the proof of Theorem \ref{lemma:two round dsic}.

\section{Nash Equilibrium in Two-round Auctions}
\label{ne condition}


As the DSIC condition for two-round auctions requires that bidders with top bids have equal probabilities to the second round, it is interesting to consider those two-round auctions without such a condition.
Specifically, we focus on the popular monotonic condition, that the probability each bidder is selected for the second round is monotonicly increasing w.r.t.\ its bid, given others' bids fixed.
Then we try to characterize the Nash equilibrium (NE) bidding strategies in two-round auctions.

Under the monotonic condition, it is direct that the bidder must bid the same in both rounds, and we only need to consider bidders with the highest $\theta-\alpha+\beta$ values.
Thus, for a fixed mechanism and bidders' bid $\boldsymbol{b}'$, the utility of each bidder $i$ can be represented by the order of its bid in the first round, denoted by $a_i(\boldsymbol{b}')$.
That is, $$u(v_i,a_i(\boldsymbol{b}')) = \mathbb{E}_{L^{(\alpha)}}[u_2(v_i,\boldsymbol{x}(\boldsymbol{s}_{L^{(\alpha)}}),\boldsymbol{p}(\boldsymbol{s}_{L^{(\alpha)}}))],$$ 
where $$u_2(v_i,\boldsymbol{x}(\boldsymbol{s}_{L^{(\alpha)}}),\boldsymbol{p}(\boldsymbol{s}_{L^{(\alpha)}}))=v_i\cdot x_j-p_j$$ 
if $a_i\in L^{(\alpha)}$ and $b_i$ is the $j$-th highest in $\boldsymbol{s}_{L^{(\alpha)}}$, $\boldsymbol{p}$ is given by Myerson's lemma.
Now we define NE following the one given by~\cite{ref_article4}.
Recalled that $u(\cdot)$ was defined as utility in the whole two-round auction in this papers, which is different from the article~\cite{ref_article4} focusing on the single-round auction.
Intuitively, under NE, no bidder can benefit by deviating to another position. 

\begin{definition}
\label{definition:ne}
    A bid profile $\boldsymbol{b}'=(b'_i,\boldsymbol{b}'_{-i})$ is a Nash equilibrium (NE) if it satisfies

    $$u(v_i,a_i(\boldsymbol{b}'))\geq u(v_i,t),\forall t=1,2,\cdots,(\theta-\alpha+\beta).$$
    Specifically, we use the notation $\mathbf{Rank}[a_1,a_2,\dots,a_n]$ to represent the bidding positions for bidders in the first round, then we say $\mathbf{Rank}[a_1,a_2,\dots,a_n]$ is NE if the corresponding utilities of bidders satisfy the condition.

\end{definition}

\paragraph{Auction Mode}

We set $\theta-\alpha=\Delta n$, and use $\mathcal{N}_{\Delta n,\beta}$ to denote the mode of an auction.
As representing NE for all modes is too complicated, we characterize NE for the simple mode $\mathcal{N}_{1,1}$ and present a simple analysis for $\mathcal{N}_{1,2}$.

\subsection{Nash equilibria of mode $\mathcal{N}_{1,1}$}
In mode $\mathcal{N}_{1,1}$, that is $\alpha=\theta-1,\beta=1$, the second round degenerates into the second-price auction.
Now $\theta-\alpha+\beta=\Delta n+\beta=2$, it means only bidders $1$ and $2$ have the probability to win in the second round and risky bidding will only happen between these $(\theta-\alpha+\beta)=2$ bidders.
In the meantime, the others will bid truthfully in Nash equilibrium which is $s_j'=b'_j=v_j,3\ge j\ge \theta$.
In this section, we discuss the conditions and the revenue of Nash equilibria of mode $\mathcal{N}_{1,1}$.

We have defined $Y_i$ as the probability of each combination when choosing $\alpha$ bidders from $\theta(\theta\ge 3)$ ones.
\begin{equation*}
y_1>y_2>y_j,j=3,\cdots,\theta
\end{equation*}
\begin{equation*}
\displaystyle{Y_{\theta-i+1}=\lambda\cdot\prod\limits_{j=1}^{\theta}{y_j}\cdot\frac{1-y_i}{y_i},\quad\sum\limits_{i=1}^{\theta}{Y_i}=1}
\end{equation*}
\begin{equation*}
Y_i>Y_{\theta-1}>Y_{\theta},i=1,\cdots,\theta-2
\end{equation*}
To satisfy $\sum\limits_{i=1}^{\theta}{Y_i}=1$, we get
\begin{equation}
\displaystyle{Y_{\theta-1}<\frac{1}{\theta-1},Y_{\theta}<\frac{1}{\theta}}\label{eq1}
\end{equation}
Otherwise we get $\sum\limits_{i=1}^{\theta}{Y_i}>1.$

\subsubsection{Calculation.}From the Definition \ref{definition:ne}, we find the sufficient and necessary condition of the Nash equilibria of mode $\mathcal{N}_{1,1}$ can be expressed as 2 inequalities for each ranking list,
$$ \left\{
\begin{aligned}
u(v_{i_1},1)>u(v_{i_1},2),a_{i_1}=1, \\
u(v_{i_2},1)<u(v_{i_2},2),a_{i_2}=2, 
\end{aligned}
\right.
$$
$$
for\quad i_1,i_2\in\{1,2\}.
$$
(\rmnum{1})For $\mathbf{Rank}[1,2, 3,\cdots,\theta],a_1=1,a_2=2$,
\begin{equation*}
    u(v_2,2)=Y_{\theta}(v_2-v_3),u(v_1,1)=\displaystyle{(\sum\limits_{i=1}^{\theta-2}{Y_i})(v_1-b_2')+Y_{\theta-1}(v_1-v_3)}
\end{equation*}
For $u(v_2,2)<u(v_2,1),u(v_1,2)>u(v_1,1)$, the $\mathbf{Rank}[2,1,3,\cdots, \theta]$ is a Nash equilibrium of risky bidding, which satisfies
\begin{equation}
\displaystyle{\frac{v_1-v_2}{v_2-v_3}<\frac{Y_{\theta-1}-Y_{\theta}}{\sum\limits_{i=1}^{\theta-2}Y_i}<\frac{b_2'-v_1}{v_1-v_3}}\label{con1}
\end{equation}

(\rmnum{2})For $\mathbf{Rank}[2,1, 3,\cdots,\theta],a_1=2,a_2=1$,
\begin{equation*}
    u(v_1,2)=Y_{\theta}(v_1-v_3),u(v_2,1)=\displaystyle{(\sum\limits_{i=1}^{\theta-2}{Y_i})(v_2-v_1)+Y_{\theta-1}(v_2-v_3)}
\end{equation*}
For $u(v_2,2)>u(v_2,1)$, the $\mathbf{Rank}[1,2,3,\cdots, \theta]$ is a Nash equilibrium of truthful bidding, which satisfies
\begin{equation}
\displaystyle{\frac{v_1-v_2}{v_2-v_3}>\frac{Y_{\theta-1}-Y_{\theta}}{\sum\limits_{i=1}^{\theta-2}Y_i}}\label{con2}
\end{equation}
If $u(v_2,2)>u(v_2,1)$, the $v_2$ bidder must bid truthfully so $b_2'=v_2$, and it must satisfy $u(v_1,1)>u(v_1,2)$ for the nonnegative item $(\sum\limits_{i=1}^{\theta-2}{Y_i})(v_1-b_2')$.\\

\subsubsection{Analysis.}For inequality (\ref{eq1}),\\
\begin{center}
$\displaystyle{\frac{\partial\Bigg(\frac{Y_{\theta-1}-Y_{\theta}}{1-Y_{\theta-1}-Y_{\theta}}\Bigg)}{\partial Y_{\theta}}<0},\displaystyle{\frac{\partial\Bigg(\frac{Y_{\theta-1}-Y_{\theta}}{1-Y_{\theta-1}-Y_{\theta}}\Bigg)}{\partial Y_{\theta-1}}>0}\Rightarrow \displaystyle{\frac{Y_{\theta-1}-Y_{\theta}}{\sum\limits_{i=1}^{\theta-2}Y_i}}<\frac{1}{\theta-2}$
\end{center}
If $\displaystyle{\frac{v_1-v_2}{v_2-v_3}>\frac{1}{\theta-2}}$, the risky bidding doesn't exist anymore, which means there is only a single Nash equilibrium. 
On the condition of $\displaystyle{\frac{v_1-v_2}{v_2-v_3}<\frac{1}{\theta-2}}$, there may exist the risky bidding.
\subsection{Revenue of mode $\mathcal{N}_{1,1}$}
We set the ranking list as the bidders' number ranking by the order in the final round.
For example, $\mathbf{Rank}[2,1,3,\cdots, \theta]$ means $s_2'>s_1'>s_3'>\cdots>s_{\theta}'$, which is a typical Nash equilibrium when $s_j'=b_j'=v_j,2\ge j\ge \theta,s_1'=b_1'=v_1,s_2'=b_2'>v_1$.
We define the revenue of the specified Nash equilibrium as
\begin{equation*}
\mathbf{REV}(\mathbf{Rank}[a_1,a_2,\dots,a_\theta])=\sum\limits_{i=1}^{C(\theta,\alpha)}Y_i\cdot p_i
\end{equation*}
\subsubsection{Calculation.}For the 2 Nash equilibrium in $\mathcal{N}_{1,1}$,
\begin{equation*}
\mathbf{REV}(\mathbf{Rank}[2,1,3,\dots,\theta])=(\sum\limits_{i=1}^{\theta-2}Y_i)v_1+(Y_{\theta-1}+Y_{\theta})v_3
\end{equation*}
\begin{equation*}
\mathbf{REV}(\mathbf{Rank}[1,2,3,\cdots,\theta])=(\sum\limits_{i=1}^{\theta-2}Y_i)v_2+(Y_{\theta-1}+Y_{\theta})v_3
\end{equation*}
It is not right to conclude that $\mathbf{REV}(\mathbf{Rank}[2,1,3,\cdots,\theta])$is greater than $\mathbf{REV}(\mathbf{Rank}[1,2,3,\cdots,\theta])$ since $\mathbf{Y}$ in $(\mathbf{Rank}[2,1,3,\cdots,\theta])$ and $(\mathbf{Rank}[1,2,3,\cdots,\theta])$ satisfy condition $(\ref{con1})$ and condition $(\ref{con2})$ seperately.
To prevent confusion, we let $\mathbf{Y}$ be probability setting in truthful bidding and $\mathbf{Y'}$ in risky bidding in the following section.
Because of inequality (\ref{eq1}),
\begin{equation*}
    \displaystyle{\mathbf{REV}(\mathbf{Rank}[2,1,3,\cdots,\theta])\in(\frac{\theta^2-3\theta+1}{\theta^2-\theta}v_1+\frac{2\theta-1}{\theta^2-\theta}v_3,v_1)}
\end{equation*}
\begin{equation*}
    \displaystyle{\mathbf{REV}(\mathbf{Rank}[1,2,3,\cdots,\theta])\in(\frac{\theta^2-3\theta+1}{\theta^2-\theta}v_2+\frac{2\theta-1}{\theta^2-\theta}v_3,v_2)}
\end{equation*}
we need to know,
\begin{equation*}
    \displaystyle{sup\{\mathbf{REV}(\mathbf{Rank}[1,2,3,\cdots,\theta])\}<sup\{\mathbf{REV}(\mathbf{Rank}[2,1,3,\cdots,\theta])\}}
\end{equation*}
\begin{equation*}
    \displaystyle{inf\{\mathbf{REV}(\mathbf{Rank}[1,2,3,\cdots,\theta])\}<inf\{\mathbf{REV}(\mathbf{Rank}[2,1,3,\cdots,\theta])\}}
\end{equation*}
\subsubsection{Analysis.}Recalled that $\theta$ is the number of bidders who have the probability to be chosen into the second round.
If we increase the $\theta$, absolutely we can increase the infimum of revenue because the coefficient of $v_3$ will decrease by inequality (\ref{eq1}).
However, we also decrease the probability of the existence of risky bidding by condition (\ref{con1}).
There are 3 meaningful Lemmas about the revenue of risky bidding in the following section.

Notice $v_1$ is not a tight supremum for $\mathbf{REV}(\mathbf{Rank}[2,1,3,\cdots,\theta])$.
Because $\sum\limits_{i=1}^{\theta-2}Y_i'$ must tend to $1$ if revenue of risky bidding tends to $v_1$.
At the same time, the more $\sum\limits_{i=1}^{\theta-2}Y_i'$ tends to $1$, the more difficulty it satisfies condition $(\ref{con1})$ for specified gap between $v_1$ and $v_2$.
\begin{center}
    $\lim\limits_{v_2\rightarrow v_1}\mathbf{REV}(\mathbf{Rank}[2,1,3,\cdots,\theta])=v_1$
\end{center}
Only when $v_2$ tends to $v_1$, the revenue of risky bidding can approach $v_1$ but not violate condition $(\ref{con1})$ as the above limits expression shows. So we find tighter supremum for the revenue of risky bidding in Lemma $\ref{lemma:n_11 supremum}$ through taking condition $(\ref{con1})$ into consideration as well.

\begin{lemma}
    \label{lemma:n_11 supremum}
        the revenue of the risky bidding in the mode $\mathcal{N}_{1,1}$ has a supremum and the supremum approaches the $v_1$ when the gap between $v_2$ and $v_1$ decreases.
    \end{lemma}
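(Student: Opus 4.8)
The plan is to collapse the revenue of the risky equilibrium to a one-dimensional quantity and then optimize it against the existence condition (\ref{con1}). First, using $\sum_{i=1}^{\theta}Y_i=1$ I would rewrite
$$\mathbf{REV}(\mathbf{Rank}[2,1,3,\dots,\theta])=\Big(\sum_{i=1}^{\theta-2}Y_i\Big)v_1+(Y_{\theta-1}+Y_{\theta})v_3=v_1-(Y_{\theta-1}+Y_{\theta})(v_1-v_3),$$
so that maximizing the revenue is the same as minimizing the tail mass $s:=Y_{\theta-1}+Y_{\theta}$. Since $s>0$ the only obvious bound is $\mathbf{REV}<v_1$; the whole point of the lemma is that feasibility of the risky equilibrium forces $s$ away from $0$ and hence yields a strictly tighter supremum.

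Second, I would feed in the lower bound of (\ref{con1}). Writing $d:=Y_{\theta-1}-Y_{\theta}$ and $c:=\frac{v_1-v_2}{v_2-v_3}$, the existence of the risky equilibrium demands $\frac{d}{1-s}>c$, i.e.\ $d>c(1-s)$. Because $Y_{\theta}>0$ one always has $d<s$, so $c(1-s)<s$, which rearranges to
$$s>\frac{c}{1+c}=\frac{v_1-v_2}{v_1-v_3}.$$
Substituting this lower bound on $s$ back into the revenue gives $\mathbf{REV}<v_1-\frac{v_1-v_2}{v_1-v_3}(v_1-v_3)=v_2$. This already establishes the first assertion: the risky revenue is bounded above by $v_2<v_1$, so a finite supremum exists. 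Note that the upper bound in (\ref{con1}), which constrains $b_2'$, is not binding here, since it can be met by taking $b_2'$ large and the revenue does not depend on $b_2'$.

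Third, I would show the bound is tight, so that the supremum equals exactly $v_2$. The plan is to exhibit a sequence of admissible probability vectors with $Y_{\theta}\downarrow 0$, $d\uparrow s$, and $s\downarrow\frac{c}{1+c}$, along which all the constraints $Y_1,\dots,Y_{\theta-2}>Y_{\theta-1}>Y_{\theta}>0$, $\sum_i Y_i=1$, and (\ref{con1}) remain satisfied; for such a sequence $\mathbf{REV}\to v_2$. Combined with the upper bound this gives $\sup\mathbf{REV}=v_2$, and since a small gap means $v_2\to v_1$, the supremum approaches $v_1$, which is the second assertion.

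The step I expect to be the main obstacle is verifying feasibility of this tightening sequence. As $d\uparrow s$ one has $Y_{\theta-1}\to s\to\frac{c}{1+c}$, and distributing the residual mass $1-s$ over the $\theta-2$ leading coordinates, each required to exceed $Y_{\theta-1}$, together with the cap $Y_{\theta-1}<\frac{1}{\theta-1}$ from (\ref{eq1}), is possible precisely when $\frac{c}{1+c}<\frac{1}{\theta-1}$, equivalently $c<\frac{1}{\theta-2}$. This is exactly the condition for risky bidding to exist that was already extracted from (\ref{eq1}) in the preceding Analysis, so I would invoke it to guarantee the construction is admissible and thereby close the argument.
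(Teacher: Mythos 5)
Your proof is correct, and it takes a genuinely different --- and sharper --- route than the paper's. The paper keeps the same revenue formula but eliminates $v_3$: from the left inequality of (\ref{con1}) it substitutes $v_3 < v_2 - \frac{\sum_{i=1}^{\theta-2}Y'_i}{Y'_{\theta-1}-Y'_{\theta}}(v_1-v_2)$ into $\mathbf{REV}(\mathbf{Rank}[2,1,3,\cdots,\theta])$, obtaining an upper bound that still depends on $\mathbf{Y'}$, and then shows this bound tends to $v_1$ as $v_2\to v_1$. You instead collapse everything onto the tail mass $s=Y'_{\theta-1}+Y'_{\theta}$, combine the left inequality of (\ref{con1}) with the trivial bound $Y'_{\theta-1}-Y'_{\theta}<s$ to get a bound independent of $\mathbf{Y'}$, namely $s>\frac{v_1-v_2}{v_1-v_3}$ and hence $\mathbf{REV}<v_2$, and then prove tightness with an explicit feasible sequence, identifying $\sup\mathbf{REV}=v_2$ in closed form. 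This buys three things the paper's argument does not make explicit: (i) the lemma's second claim becomes immediate, since $v_2\to v_1$ as the gap closes; (ii) the uniform bound $\mathbf{REV}<v_2$ gives a one-line explanation of the paper's subsequent remark that Lemma \ref{lemma:n_11 infimum} conflicts with (\ref{con1}), i.e.\ risky bidding can never beat the single-round second-price revenue; (iii) your tightness construction supplies the lower-bound half of the statement --- that the supremum genuinely \emph{approaches} $v_1$, rather than merely being dominated by an expression tending to $v_1$ --- which the paper's displayed limit (taken with $\mathbf{Y'}$ held fixed inside the bound) only handles informally. Your feasibility verification is also sound: spreading the mass $1-s$ evenly over the leading $\theta-2$ coordinates is compatible with $Y'_i>Y'_{\theta-1}>Y'_{\theta}>0$ and (\ref{eq1}) exactly when $\frac{v_1-v_2}{v_2-v_3}<\frac{1}{\theta-2}$, which is the existence condition for risky bidding already derived in the paper's Analysis subsection; note too that the monotonic condition $y_1>y_2>y_j$, $j\ge 3$, imposes no ordering among $Y'_1,\dots,Y'_{\theta-2}$, so equal leading coordinates are admissible, and your dismissal of the right inequality of (\ref{con1}) is legitimate because it only constrains $b_2'$, on which the revenue does not depend. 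The single caveat is bookkeeping: when $\frac{v_1-v_2}{v_2-v_3}\ge\frac{1}{\theta-2}$ the feasible set is empty and the supremum is vacuous, so your conclusion should be read as conditional on the regime where risky equilibria exist --- the same regime the lemma implicitly assumes.
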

    \begin{proof}
    From condition $(\ref{con1})$,we get
    \begin{center}
    $v_3<v_2-\displaystyle{\frac{\sum\limits_{i=1}^{\theta-2}Y'_i}{Y'_{\theta-1}-Y'_{\theta}}(v_1-v_2)}$
    \end{center}
    Replacing $v_3$ we get that,
    \begin{eqnarray}
    &&\mathbf{REV}(\mathbf{Rank}[2,1,3,\cdots,\theta])\notag\\
    &<&(\sum\limits_{i=1}^{\theta-2}Y'_i)v_1+(Y'_{\theta-1}+Y'_{\theta})(v_2-\displaystyle{\frac{\sum\limits_{i=1}^{\theta-2}Y'_i(v_1-v_2)}{Y'_{\theta-1}-Y'_{\theta}}})\notag\\
    &=&\displaystyle{\frac{(Y'_{\theta-1}-Y'_{\theta})\sum\limits_{i=1}^{\theta-2}Y'_i}{Y'_{\theta-1}-Y'_{\theta}}v_1-\frac{(Y'_{\theta-1}+Y'_{\theta})\sum\limits_{i=1}^{\theta-2}Y'_i}{Y'_{\theta-1}-Y'_{\theta}}v_1} \displaystyle{+\frac{(Y'_{\theta-1}+Y'_{\theta})(\sum\limits_{i=1}^{\theta-1}Y'_i-Y'_{\theta})}{Y'_{\theta-1}-Y'_{\theta}}v_2}\notag\\
    &=&\displaystyle{\frac{(Y'_{\theta-1}+Y'_{\theta})(1-2Y'_{\theta})}{Y'_{\theta-1}-Y'_{\theta}}v_2-\frac{2Y'_{\theta}\sum\limits_{i=1}^{\theta-2}Y'_i}{Y'_{\theta-1}-Y'_{\theta}}v_1}\notag,
\end{eqnarray}
\begin{eqnarray}
    \lim\limits_{v_2\rightarrow v_1}sup\{\mathbf{REV}(\mathbf{Rank}[2,1,3,\cdots,\theta])\}
    &=&\lim\limits_{v_2\rightarrow v_1}\displaystyle{\frac{(Y'_{\theta-1}+Y'_{\theta})v_2-2Y'_{\theta}(\sum\limits_{i=1}^{\theta-2}Y_iv_1+(Y'_{\theta-1}+Y'_{\theta})v_2)}{Y'_{\theta-1}-Y'_{\theta}}}\notag\\
    &=&\displaystyle{\frac{(Y'_{\theta-1}+Y'_{\theta})v_1-2Y'_{\theta}\cdot 1\cdot v_1}{Y'_{\theta-1}-Y'_{\theta}}}\notag\\
    &=&v_1.\notag
    \end{eqnarray}
    
    \end{proof} 
Since the revenue of the single-round second price auction is $v_2$, any more complicated auction whose revenue doesn't surpass $v_2$ is meaningless.
For example $\mathbf{REV}(\mathbf{Rank}[1,2,3,\cdots,\theta])<v_2$ is meaningless in mode $\mathcal{N}_{1,1}$. 
It is significant for the auction designers to know the necessary setting to keep the meaningfulness in the two-round auction. 
What Lemma \ref{lemma:n_11 infimum} explains is that for proper $\mathbf{Y}$ in a two-round auction there is Nash equilibrium in risky bidding and the auctioneer will profit more in the single-round auction.
\begin{lemma}
    \label{lemma:n_11 infimum}
        Only when $Y'_{\theta}+Y'_{\theta-1}<\displaystyle{\frac{v_1-v_2}{v_1-v_3}}$, the two-round risky bidding has better revenue than truthful single-round auction.
    \end{lemma}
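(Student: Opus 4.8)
The plan is to reduce the revenue comparison to a single linear inequality in the aggregate tail probability $Y'_{\theta-1}+Y'_{\theta}$. First I would fix the benchmark: the truthful single-round second-price auction extracts revenue $v_2$ (the second-highest value), so the phrase ``better revenue than truthful single-round auction'' means precisely $\mathbf{REV}(\mathbf{Rank}[2,1,3,\cdots,\theta]) > v_2$.

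Next I would rewrite the risky-bidding revenue using the normalization $\sum_{i=1}^{\theta} Y'_i = 1$ established earlier. Since $\sum_{i=1}^{\theta-2} Y'_i = 1 - Y'_{\theta-1} - Y'_{\theta}$, the computed revenue formula
$$\mathbf{REV}(\mathbf{Rank}[2,1,3,\cdots,\theta]) = \Big(\sum_{i=1}^{\theta-2} Y'_i\Big) v_1 + (Y'_{\theta-1}+Y'_{\theta}) v_3$$
collapses into the convenient one-parameter form
$$\mathbf{REV}(\mathbf{Rank}[2,1,3,\cdots,\theta]) = v_1 - (Y'_{\theta-1}+Y'_{\theta})(v_1 - v_3).$$

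Then I would impose $\mathbf{REV} > v_2$, i.e. $v_1 - (Y'_{\theta-1}+Y'_{\theta})(v_1 - v_3) > v_2$, which rearranges to $(Y'_{\theta-1}+Y'_{\theta})(v_1 - v_3) < v_1 - v_2$. Because the assumed strict ordering of the top values gives $v_1 - v_3 > 0$, dividing through preserves the inequality direction and produces exactly $Y'_{\theta-1}+Y'_{\theta} < \frac{v_1-v_2}{v_1-v_3}$, as claimed. Since every step in this chain is an equivalence, the stated condition is simultaneously necessary and sufficient, matching the ``only when'' phrasing.

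Rather than a genuine obstacle, the one point demanding care is the choice of benchmark: the comparison is against the \emph{single-round} truthful revenue $v_2$, not against the truthful two-round equilibrium $\mathbf{Rank}[1,2,3,\cdots,\theta]$ (whose revenue is bounded above by $v_2$ and is therefore never the right yardstick). I would also explicitly verify $v_1 > v_3$ so that the final division is legitimate and orientation-preserving; this holds under the strict value gaps imposed throughout mode $\mathcal{N}_{1,1}$, so the argument goes through without additional hypotheses.
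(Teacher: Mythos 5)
Your proposal is correct and follows essentially the same route as the paper: both reduce $\mathbf{REV}(\mathbf{Rank}[2,1,3,\cdots,\theta])>v_2$ to a linear inequality in $Y'_{\theta-1}+Y'_{\theta}$ via the normalization $\sum_{i=1}^{\theta-2}Y'_i=1-Y'_{\theta-1}-Y'_{\theta}$ and then divide by $v_1-v_3>0$. Your explicit remarks on the choice of benchmark ($v_2$ from the single-round auction) and on the sign of $v_1-v_3$ are points the paper leaves implicit, but the algebra is the same chain of equivalences.
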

    \begin{proof}
    \begin{eqnarray}
    \mathbf{REV}(\mathbf{Rank}[2,1,3,\cdots,\theta])&>&v_2\notag,\\
    (1-Y'_{\theta-1}-Y'_{\theta})v_1+(Y'_{\theta-1}+Y'_{\theta})v_3&>&v_2\notag,\\
    (1-Y'_{\theta-1}-Y'_{\theta})v_1+(Y'_{\theta-1}+Y'_{\theta}-1)v_3&>&v_2-v_3\notag,\\
    (1-Y'_{\theta-1}-Y'_{\theta})(v_1-v_3)&>&v_2-v_3\notag,\\
    (1-Y'_{\theta-1}-Y'_{\theta})&>&\displaystyle{\frac{v_2-v_3}{v_1-v_3}}\notag,\\
    Y'_{\theta}+Y'_{\theta-1}&<&\displaystyle{\frac{v_1-v_2}{v_1-v_3}}.\notag
    \end{eqnarray}
    
    \end{proof}

However, Lemma \ref{lemma:n_11 infimum} conflicts with condition (\ref{con1}), which means the risky bidding and better revenue cannot be fulfilled simultaneously. In other words, there is no specific $\mathbf{Y}$ setting for a risky bidding that has a better revenue compared with the truthful single-round auction.

What Lemma \ref{lemma:n_11 all} explains is that in a two-round auction, the revenue of risky bidding is better than that of truthful bidding.
In other words, it's important for the auctioneer to know how to set proper $\mathbf{Y'}$ to reach risky Nash equilibrium to profit more.
Lemma \ref{lemma:n_11 all} provides a way to get proper $\mathbf{Y'}$ in the risky bidding based on the known $\mathbf{Y}$ in the truthful bidding.

\begin{lemma}
    \label{lemma:n_11 all}
        No matter what allocation rule is set in the truthful two-round bidding, we can find a corresponding one in the risky two-round bidding which have a better revenue performance.
    \end{lemma}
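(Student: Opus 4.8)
The plan is to reduce both revenue formulas to a single scalar parameter and then exhibit an admissible risky profile that beats the given truthful one. Write $A=Y_{\theta-1}+Y_{\theta}$ for the truthful lottery and $A'=Y'_{\theta-1}+Y'_{\theta}$ for the risky one. Since $\sum_i Y_i=\sum_i Y'_i=1$, the revenues collapse to the affine forms
\[
\mathbf{REV}(\mathbf{Rank}[1,2,3,\cdots,\theta])=(1-A)v_2+Av_3=v_2-A(v_2-v_3),
\]
\[
\mathbf{REV}(\mathbf{Rank}[2,1,3,\cdots,\theta])=(1-A')v_1+A'v_3=v_1-A'(v_1-v_3).
\]
The first is a fixed number \emph{strictly} below $v_2$, because any genuine first-round lottery has $A=Y_{\theta-1}+Y_{\theta}>0$; the second is strictly decreasing in $A'$. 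Hence the lemma reduces to producing a valid risky setting $\mathbf{Y}'$ whose $A'$ is small enough that $v_1-A'(v_1-v_3)$ overtakes the fixed value $v_2-A(v_2-v_3)$.

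Next I would pin down the window of admissible $A'$. The target inequality $\mathbf{REV}(\mathbf{Rank}[2,1,3,\cdots,\theta])>\mathbf{REV}(\mathbf{Rank}[1,2,3,\cdots,\theta])$ is equivalent to
\[
A'<\frac{v_1-v_2}{v_1-v_3}+A\,\frac{v_2-v_3}{v_1-v_3}.
\]
On the other hand, for the risky profile to be a Nash equilibrium it must obey condition (\ref{con1}); since $Y'_{\theta}>0$ gives $Y'_{\theta-1}-Y'_{\theta}<A'$ while $\sum_{i=1}^{\theta-2}Y'_i=1-A'$, the lower bound of (\ref{con1}) forces
\[
\frac{v_1-v_2}{v_2-v_3}<\frac{Y'_{\theta-1}-Y'_{\theta}}{1-A'}<\frac{A'}{1-A'}\ \Longrightarrow\ A'>\frac{v_1-v_2}{v_1-v_3}.
\]
The two constraints are compatible exactly because their endpoints differ by $A\,\frac{v_2-v_3}{v_1-v_3}>0$: the revenue shortfall $A(v_2-v_3)$ that the truthful two-round lottery already pays relative to $v_2$ is precisely the slack that a risky setting can exploit while still respecting (\ref{con1}). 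This is the conceptual heart of the lemma.

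With a target $A'$ fixed in the open interval $\bigl(\frac{v_1-v_2}{v_1-v_3},\,\frac{v_1-v_2}{v_1-v_3}+A\frac{v_2-v_3}{v_1-v_3}\bigr)$, I would then construct $\mathbf{Y}'$ explicitly. Take $Y'_{\theta}$ small and $Y'_{\theta-1}=A'-Y'_{\theta}$ so that $\frac{Y'_{\theta-1}-Y'_{\theta}}{1-A'}>\frac{v_1-v_2}{v_2-v_3}$, distribute the remaining mass $1-A'$ over $Y'_1,\dots,Y'_{\theta-2}$ keeping each strictly above $Y'_{\theta-1}$, and let the risky bidder post $b_2'$ large enough that $\frac{b_2'-v_1}{v_1-v_3}$ exceeds $\frac{Y'_{\theta-1}-Y'_{\theta}}{1-A'}$, which satisfies the upper bound of (\ref{con1}). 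Substituting this $A'$ into the affine revenue form yields $\mathbf{REV}(\mathbf{Rank}[2,1,3,\cdots,\theta])>\mathbf{REV}(\mathbf{Rank}[1,2,3,\cdots,\theta])$, giving the stated $\mathbf{Y}'$ as a function (through $A$) of the known $\mathbf{Y}$.

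The step I expect to be the main obstacle is \emph{realizability}: the vector $\mathbf{Y}'$ is not free but must be induced by genuine first-round marginals $y'_j$ through $Y'_{\theta-i+1}=\lambda\prod_j y'_j\frac{1-y'_i}{y'_i}$, together with $\sum_i Y'_i=1$ and the strict ordering $Y'_i>Y'_{\theta-1}>Y'_{\theta}$. I would verify that sending $y'_1\to 1$ shrinks $Y'_{\theta}$ toward $0$ as required, and that the ordering demand $1-A'>(\theta-2)Y'_{\theta-1}$ can be met; this last point couples $\theta$ to the value gaps and is consistent with the earlier observation that risky bidding exists only when $\frac{v_1-v_2}{v_2-v_3}<\frac{1}{\theta-2}$. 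Confirming that the posted $b_2'$ is incentive-consistent with the risky equilibrium, and not merely a solution of the displayed inequality, is the remaining check; everything else is bookkeeping on two affine functions of $A$ and $A'$.
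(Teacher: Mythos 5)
Your proposal is correct, and it rests on the same core reduction the paper uses---both revenues are affine in the bottom-two probability mass ($A$ for truthful, $A'$ for risky), so everything hinges on where $A'$ can sit---but your organization of the argument is genuinely different. The paper's proof is a case analysis keyed to the \emph{given} truthful $\mathbf{Y}$: when $\frac{v_1-v_2}{v_2-v_3}<\frac{Y_{\theta-1}+Y_{\theta}}{1-Y_{\theta-1}-Y_{\theta}}$ it keeps $A'=A$ and merely re-splits the bottom two probabilities (a limit argument with $Y'_{\theta}\to 0$ shows $\frac{Y'_{\theta-1}-Y'_{\theta}}{1-A'}$ can be pushed above $\frac{v_1-v_2}{v_2-v_3}$, after which revenue dominance is immediate because the coefficient multiplying $v_1$ equals the one multiplying $v_2$); in the opposite case it enlarges the bottom mass by $\Delta Y$ and derives the budget $\frac{\Delta Y}{1-Y_{\theta-1}-Y_{\theta}}<\frac{v_1-v_2}{v_1-v_3}$. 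You instead characterize, in absolute terms, the full window of admissible $A'$: the lower endpoint $\frac{v_1-v_2}{v_1-v_3}$ forced by the lower bound of condition (\ref{con1}) together with $Y'_{\theta}>0$, and the upper endpoint $\frac{v_1-v_2}{v_1-v_3}+A\frac{v_2-v_3}{v_1-v_3}$ from the revenue target, nonempty exactly because $A>0$. This single-window formulation subsumes both of the paper's cases, and it buys two things the paper does not make explicit: (i) a transparent reconciliation with Lemma \ref{lemma:n_11 infimum} (risky bidding can never beat $v_2$ itself, but the truthful two-round revenue falls short of $v_2$ by exactly $A(v_2-v_3)$, and that shortfall is the exploitable slack); and (ii) the realizability bookkeeping---the ordering $Y'_i>Y'_{\theta-1}>Y'_{\theta}$, the underlying marginals $y'_j$, and the choice of $b_2'$ for the upper bound of (\ref{con1})---which the paper's proof silently skips. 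Your tie-in of the ordering demand to the existence condition $\frac{v_1-v_2}{v_2-v_3}<\frac{1}{\theta-2}$ is also correct, since $\frac{v_1-v_2}{v_1-v_3}<\frac{1}{\theta-1}$ is algebraically equivalent to it. The one caveat, shared by both arguments, is the implicit assumption that the values admit a risky equilibrium at all; under that assumption your construction is complete, and is arguably the cleaner of the two.
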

    \begin{proof}
    For every $\mathbf{Y}=(Y_1,Y_2,\cdots,Y_{\theta})$ satisfying condition (\ref{con2}), there exists $\mathbf{Y'}=(Y_1',Y_2',\cdots,Y_{\theta}')$ satisfying condition (\ref{con1}),subjecting to $\mathbf{REV}(\mathbf{Rank}[1,2,3,\cdots,\theta])<\mathbf{REV}(\mathbf{Rank}[2,1,3,\cdots,\theta])$\\
    
    $(\rmnum{1})$For the circumstance of $\displaystyle{\frac{v_1-v_2}{v_2-v_3}<\frac{Y_{\theta-1}+Y_{\theta}}{1-Y_{\theta-1}-Y_{\theta}}}$, we implement fractional deformation and get,
    $$ 
    Y_{\theta}+Y_{\theta-1}>\displaystyle{\frac{v_1-v_2}{v_1-v_3}}.
    $$
    When $\sum\limits_{i=1}^{\theta-2}Y_i=\sum\limits_{i=1}^{\theta-2}Y_i',Y_{\theta-1}+Y_{\theta}=Y_{\theta-1}'+Y_{\theta}'$, we can deduce,
    \begin{eqnarray}
       \frac{Y_{\theta-1}'-Y_{\theta}'}{1-Y_{\theta-1}'-Y_{\theta}'}&<&\lim\limits_{Y_{\theta-1}'\rightarrow Y_{\theta-1}+Y_{\theta},Y_{\theta}'\rightarrow 0}\Bigg(\frac{Y_{\theta-1}'-Y_{\theta}'}{1-Y_{\theta-1}'-Y_{\theta}'}\Bigg)  
       \notag\\ 
       &=&\frac{Y_{\theta-1}+Y_{\theta}}{1-Y_{\theta-1}-Y_{\theta}}, \notag 
    \end{eqnarray}
        
    so there exists $ (Y_{\theta-1}'-Y_{\theta}')>\displaystyle{\frac{v_1-v_2}{v_2-v_3}(1-Y_{\theta-1}-Y_{\theta})}$ subjecting to $\mathbf{Y}$ satisfying condition $(\ref{con2})$.\\
    $(\rmnum{2})$ For the circumstance of 
    $\displaystyle{\frac{v_1-v_2}{v_2-v_3}>\frac{Y_{\theta-1}+Y_{\theta}}{1-Y_{\theta-1}-Y_{\theta}}}$, we implement fractional deformation and get,\\
    $$
    Y_{\theta}+Y_{\theta-1}<\displaystyle{\frac{v_1-v_2}{v_1-v_3}}.
    $$
    Only when$\sum\limits_{i=1}^{\theta-2}Y_i>\sum\limits_{i=1}^{\theta-2}Y_i',Y_{\theta-1}+Y_{\theta}<Y_{\theta-1}'+Y_{\theta}'$, $\mathbf{Y'}$ has the probability to satisfy condition $(\ref{con2})$, we set $Y_{\theta-1}+Y_{\theta}+\Delta Y=Y_{\theta-1}'+Y_{\theta}'$.\\
    To satisfy $$(1-Y_{\theta-1}'-Y_{\theta}')v_1+(Y_{\theta-1}'+Y_{\theta}')v_3>(1-Y_{\theta-1}-Y_{\theta})v_2+(Y_{\theta-1}+Y_{\theta})v_3,$$\\
    we finally get that $\displaystyle{\frac{v_1-v_2}{v_1-v_3}>\frac{\Delta Y}{1-Y_{\theta-1}-Y_{\theta}}}.$
    
    \end{proof}
\subsubsection{Experiments.} When designing a proper two-round auction, we need to precisely construct probability $\boldsymbol{Y}$ of each draw result for higher revenue.
For example, for $\theta=4,\alpha=3,C_4^3=4$, the $\boldsymbol{Y}=(Y_1,Y_2,Y_3,Y_4),\sum Y_i=1$ need to be set.
To compare the revenue of risky bidding and truthful bidding, we modulated a couples of control experiments by setting $\boldsymbol{Y}$ randomly on the computer for 10,000 times.
$\boldsymbol{Y}$ follows uniform distribution.
The different setting of $\boldsymbol{Y}$ satisfies different NE, so we add up how many times they take place separately and calculate the average revenue $u(\cdot)$ divided by their times.

In Table~\ref{tab:v1v2 gap}, when the gap between $v_1$ and $v_2$ decreases, the average revenue increases in both NE and it proves the correctness of increasing supremum in Lemma~\ref{lemma:n_11 supremum}. 
In Table~\ref{tab:theta}, with the $\theta$ growing, the $Y_{\theta-1}+Y_{\theta}$ declines.
As a result, it satisfies Lemma~\ref{lemma:n_11 infimum} more easily, and the average revenue increases.
The more $\sum\limits_{i=1}^{\theta-2}Y_i'$ tends to $1$, the more difficulty it satisfies condition $(\ref{con1})$ so the risky NE is hard to take place with the growth of $\theta$. 
Specially in experiment 8, because $\displaystyle{\frac{v_1-v_2}{v_2-v_3}=0.25=\frac{1}{\theta-2}}=\frac{1}{6-2}$, the risky bidding doesn't exist anymore.

The average revenue of risky bidding is usually greater than that of the truthful bidding in both Table~\ref{tab:v1v2 gap},\ref{tab:theta}, and it accords to Lemma~\ref{lemma:n_11 all}.
Especially in experiment 4, Lemma~\ref{lemma:n_11 infimum} is hard to satisfy so the profitable risky NE is of a small portion.
As a result, the revenue is slightly less than that in truthful NE and it also emphasizes the significance of the setting of $\boldsymbol{Y}$ by the formula given by Lemma above.

Finally, we set $\mathbf{v}=(v_1,v_2,v_3)$ randomly 1000 times, which $\mathbf{v}\sim U(0,1000)$. 
For each valuation $\mathbf{v}$, we set $\mathbf{Y}$ randomly 10,000 times and record only the maximum revenue in different NE.
Then we calculate the average revenue across 1000 different valuations as shown in Table~\ref{tab:random vy}.
Absolutely, we abandon some valuations when either risky bidding or truthful bidding doesn't exist.
Easily we can discover that with the growth of $\theta$, the revenue grows but the incremental ratio of risky bidding declines compared to the truthful bidding.

\begin{table}[ht!]
\begin{center}
    \caption{Control experiment on the gap between $v_1$ and $v_2$}
    \label{tab:v1v2 gap}
    \begin{tabular}{|c|c|c|c|c|c|}
      \hline
      \multirow{2}{*}{}&\multirow{2}{*}{$(v_1,v_2,v_3)$}&\multirow{2}{*}{$\theta$}&\multicolumn{2}{|c|}{average revenue across $\mathbf{Y}$}&\multirow{2}{*}{increment}\\
      \cline{4-5}
      \multirow{2}{*}{}&\multirow{2}{*}{}&\multirow{2}{*}{}&risky bidding&truthful bidding&\multirow{2}{*}{}\\
      \hline
      1&(450,350,200)&3&325.49&279.04&16.65\%\\
      \hline
      2&(450,400,200)&3&327.07&307.26&6.45\%\\
      \hline
      3&(450,425,200)&3&329.46&322.23&2.24\%\\
      \hline
      4&(450,440,200)&3&330.10&332.57&-0.74\%\\
      \hline
    \end{tabular}
\end{center}
\end{table}

\begin{table}[ht!]
\begin{center}
    \caption{Control experiment on $\theta$}
    \label{tab:theta}
    \begin{tabular}{|c|c|c|c|c|c|c|}
      \hline
      \multirow{2}{*}{}&\multirow{2}{*}{$(v_1,v_2,v_3)$}&\multirow{2}{*}{$\theta$}&\multicolumn{2}{|c|}{average revenue across $\mathbf{Y}$}&\multirow{2}{*}{increment}\\
      \cline{4-5}
      \multirow{2}{*}{}&\multirow{2}{*}{}&\multirow{2}{*}{}&risky bidding&truthful bidding&\multirow{2}{*}{}\\
      \hline
      5&(450,400,200)&3&328.20&307.57&6.71\%\\
      \hline
      6&(450,400,200)&4&373.38&343.82&8.60\%\\
      \hline
      7&(450,400,200)&5&392.21&362.21&8.28\%\\
      \hline
      8&(450,400,200)&6&/&372.98&/\\
      \hline
    \end{tabular}
\end{center}
\end{table}

\begin{table}[ht!]
\begin{center}
    \caption{Final experiment in random valuation}
    \label{tab:random vy}
    \begin{tabular}{|c|c|c|c|c|c|}
      \hline
      \multirow{2}{*}{}&\multirow{2}{*}{$\theta$}&\multicolumn{2}{|c|}{average revenue across $\mathbf{v}$}&\multirow{2}{*}{increment}\\
      \cline{3-4}
      \multirow{2}{*}{}&\multirow{2}{*}{}&risky bidding&truthful bidding&\multirow{2}{*}{}\\
      \hline
      9&3&382.81&367.18&$4.26\%$\\
      \hline
      10&4&398.43&395.01&$0.87\%$\\
      \hline
      11&5&409.87&407.83&$0.50\%$\\
      \hline
      12&6&412.66&412.18&$0.12\%$\\
      \hline
    \end{tabular}
\end{center}
\end{table}

\subsection{Nash equilibria of mode $\mathcal{N}_{1,2}$}
\label{n12}
Because of the complexity of the total depictions of equilibria of mode $\mathcal{N}_{1,2}$, we only discuss a specific example for $\theta=4,\alpha=3,\beta=2$.
Also for the symmetry of the result, we only discuss the sufficient condition of the Nash equilibria.
Since $\theta-\alpha+\beta=3$, we need $y_1=y_2=y_3$ to guarantee the auction strategy is DSIC.
If we set $y_1>y_2>y_3>y_4$ on the popular monotonic condition, there exists $(\theta-\alpha+\beta)!=6$ kinds of Nash equilibrium for a different arrangement of orders that someone may bid riskily surpassing their valuation.
The conditions of 6 kinds of Nash equilibrium are as follows, furthermore, not all the Nash equilibrium exist for specific $\mathbf{v}$.

\subsubsection{Calculation.}The sufficient condition of the Nash equilibria satisfying the Definition \ref{definition:ne} can be expressed by six inequalities for each kind of ranking list,
$$ \left\{
\begin{aligned}
u(v_{i_1},1)>u(v_{i_1},2),u(v_{i_1},2)>u(v_{i_1},3),a_{i_1}=1, \\
u(v_{i_2},1)<u(v_{i_2},2),u(v_{i_2},2)>u(v_{i_2},3),a_{i_2}=2, \\
u(v_{i_3},1)<u(v_{i_3},2),u(v_{i_3},2)<u(v_{i_3},3),a_{i_3}=3,
\end{aligned}
\right.
$$
$$
for\quad i_1,i_2,i_3\in\{1,2,3\}.
$$

And the results are given in Table~\ref{tab:table1},\ref{tab:table2},\ref{tab:table3} for,
\begin{center}
    $V_{2,3}(d_1;d_3,d_4)=\displaystyle{\frac{d_3-d_1}{d_1-d_4},\frac{\partial V_{2,3}}{\partial d_3}>0,\frac{\partial V_{2,3}}{\partial d_1}<0},d_1>d_4$,\\
    $V_{1,2}(d_1;d_2;d_3,d_4)=\displaystyle{\frac{(d_2-d_1)(x_1-x_2)}{(d_3-d_4)x_2-(d_3-d_1)x_1},\frac{\partial V_{1,2}}{\partial d_2}>0,\frac{\partial V_{1,2}}{\partial d_3}>0,\frac{\partial V_{1,2}}{\partial d_1}<0}$, $d_1<d_2$,\\
    $Y_{2,3}=\displaystyle{\frac{(Y_2-Y_3)x_2}{Y_1x_2+Y_4(x_1-x_2)}}$,\\
    $Y_{1,2}=\displaystyle{\frac{Y_3-Y_4}{Y_1-Y_2}}$.\\
\end{center}

\begin{table}[ht!]
\begin{center}
    \caption{Condition of Nash equilibrium: Part 1}
    \label{tab:table1}
    \resizebox{0.8\linewidth}{!}{
    \begin{tabular}{|c|c|c|c|}
      \hline
      \multicolumn{2}{|c|}{inverse number}&0&1\\
      \hline
      \multicolumn{2}{|c|}{Nash equilibrium}&$v_1(s_1')>v_2(s_2')>v_3(s_3')$&$v_1(s_1')>s_3'>v_2(s_2')$\\
      \hline
      \multirow{2}{*}{bidder of $v_1$}&$2^{nd}\leftrightarrow3^{rd}$& $\underline{V_{2,3}(v_1;v_3,v_4)<Y_{2,3}}$&$\underline{V_{2,3}(v_1;v_2,v_4)<Y_{2,3}}$\\
      \cline{2-4}
      \multirow{2}{*}{}&$1^{st}\leftrightarrow2^{nd}$&$\underline{V_{1,2}(v_1;v_2;v_3,v_4)<Y_{1,2}}$&$V_{1,2}(v_1;b_3';v_2,v_4)<Y_{1,2}$\\
      \hline
      \multirow{2}{*}{bidder of $v_2$}&$2^{nd}\leftrightarrow3^{rd}$&$\underline{V_{2,3}(v_2;v_3,v_4)<Y_{2,3}}$&$V_{2,3}(v_2;b_3',v_4)>Y_{2,3}$\\
      \cline{2-4}
      \multirow{2}{*}{}&$1^{st}\leftrightarrow2^{nd}$&$V_{1,2}(v_2;v_1;v_3,v_4)>Y_{1,2}$&$V_{1,2}(v_2;v_1;b_3',v_4)>Y_{1,2}$\\
      \hline
      \multirow{2}{*}{bidder of $v_3$}&$2^{nd}\leftrightarrow3^{rd}$&$V_{2,3}(v_3;v_2,v_4)>Y_{2,3}$&$V_{2,3}(v_3;v_2,v_4)<Y_{2,3}$\\
      \cline{2-4}
      \multirow{2}{*}{}&$1^{st}\leftrightarrow2^{nd}$&$V_{1,2}(v_3;v_1;v_2,v_4)>Y_{1,2}$&$V_{1,2}(v_3;v_1;v_2,v_4)>Y_{1,2}$\\
      \hline
    \end{tabular}}
\end{center}
\end{table}

\begin{table}[ht!]
\begin{center}
    \caption{Condition of Nash equilibrium: Part 2}
    \label{tab:table2}
    \resizebox{0.8\linewidth}{!}{
    \begin{tabular}{|c|c|c|c|}
      \hline
      \multicolumn{2}{|c|}{inverse number}&1&2\\
      \hline
      \multicolumn{2}{|c|}{Nash equilibrium}&$s_2'>v_1(s_1')>v_3(s_3')$&$b_3'>v_1(s_1')>v_2(s_2')$\\
      \hline
      \multirow{2}{*}{bidder of $v_1$}&$2^{nd}\leftrightarrow3^{rd}$&$\underline{V_{2,3}(v_1;v_3,v_4)<Y_{2,3}}$&$\underline{V_{2,3}(v_1;v_2,v_4)>Y_{2,3}}$\\
      \cline{2-4}
      \multirow{2}{*}{}&$1^{st}\leftrightarrow2^{nd}$&$V_{1,2}(v_1;b_2';v_3,v_4)>Y_{1,2}$&$V_{1,2}(v_1;b_3';v_2,v_4)<Y_{1,2}$\\
      \hline
      \multirow{2}{*}{bidder of $v_2$}&$2^{nd}\leftrightarrow3^{rd}$&$\underline{V_{2,3}(v_2;v_3,v_4)<Y_{2,3}}$&$V_{2,3}(v_2;v_1,v_4)>Y_{2,3}$\\
      \cline{2-4}
      \multirow{2}{*}{}&$1^{st}\leftrightarrow2^{nd}$&$V_{1,2}(v_2;v_1;v_3,v_4)<Y_{1,2}$&$V_{1,2}(v_2;b_3';v_1,v_4)>Y_{1,2}$\\
      \hline
      \multirow{2}{*}{bidder of $v_3$}&$2^{nd}\leftrightarrow3^{rd}$&$V_{2,3}(v_3;v_1,v_4)>Y_{2,3}$&$V_{2,3}(v_3;v_2,v_4)<Y_{2,3}$\\
      \cline{2-4}
      \multirow{2}{*}{}&$1^{st}\leftrightarrow2^{nd}$&$V_{1,2}(v_3;b_2';v_1,v_4)>Y_{1,2}$&$V_{1,2}(v_3;v_1;v_2,v_4)<Y_{1,2}$\\
      \hline
    \end{tabular}}
\end{center}
\end{table}

\begin{table}[ht!]
\begin{center}
    \caption{Condition of Nash equilibrium: Part 3}
    \label{tab:table3}
    \resizebox{0.8\linewidth}{!}{ 
    \begin{tabular}{|c|c|c|c|}
      \hline
      \multicolumn{2}{|c|}{inverse number}&2&3\\
      \hline
      \multicolumn{2}{|c|}{Nash equilibrium}&$s_2'>s_3'>v_1(s_1')$&$s_3'>s_2'>v_1(s_1')$\\
      \hline
      \multirow{2}{*}{bidder of $v_1$}&$2^{nd}\leftrightarrow3^{rd}$&$V_{2,3}(v_1;b_3',v_4)>Y_{2,3}$&$V_{2,3}(v_1;b_2',v_4)>Y_{2,3}$\\
      \cline{2-4}
      \multirow{2}{*}{}&$1^{st}\leftrightarrow2^{nd}$&$V_{1,2}(v_1;b_2';b_3',v_4)>Y_{1,2}$&$V_{1,2}(v_1;b_3';b_2',v_4)>Y_{1,2}$\\
      \hline
      \multirow{2}{*}{bidder of $v_2$}&$2^{nd}\leftrightarrow3^{rd}$&$V_{2,3}(v_2;v_1,v_4)<Y_{2,3}$&$V_{2,3}(v_2;v_1,v_4)<Y_{2,3}$\\
      \cline{2-4}
      \multirow{2}{*}{}&$1^{st}\leftrightarrow2^{nd}$&$V_{1,2}(v_2;b_3';v_1,v_4)<Y_{1,2}$&$V_{1,2}(v_2;b_3';v_1,v_4)>Y_{1,2}$\\
      \hline
      \multirow{2}{*}{bidder of $v_3$}&$2^{nd}\leftrightarrow3^{rd}$&$V_{2,3}(v_3;v_1,v_4)<Y_{2,3}$&$V_{2,3}(v_3;v_1,v_4)<Y_{2,3}$\\
      \cline{2-4}
      \multirow{2}{*}{}&$1^{st}\leftrightarrow2^{nd}$&$V_{1,2}(v_3;b_2';v_1,v_4)>Y_{1,2}$&$V_{1,2}(v_3;b_2';v_1,v_4)<Y_{1,2}$\\
      \hline
    \end{tabular}}
\end{center}
\end{table}

The conditions drawn with \underline{ underlines} mean that they are redundant for the Nash equilibrium.
In other words, The conditions drawn with wave lines must be satisfied.
For example, for $V_{2,3}(a;m,n)<0$ when $a>m,a>n$, $V_{2,3}<Y_{2,3}$ must be true because of the non-negativity of $Y_{2,3}$.

\subsubsection{Analysis.}For $\beta=2$, we can get $x_1+x_2=1,x_1\ge x_2$.
So we know $x_1=1-x_2,x_2\in(0,0.5)$, and\\
\centerline{$Y_{2,3}=\displaystyle{\frac{(Y_2-Y_3)x_2}{(Y_1-2Y_4)x_2+Y_4}}$}
(\rmnum{1})If $Y_1<2Y_4$,we can get $x_2\neq\displaystyle{\frac{Y_4}{2Y_4-Y_1}}$ and $\displaystyle{\frac{Y_4}{2Y_4-Y_1}>0.5=x_{2max}}$. 
So we know that $Y_{2,3}$ monotone increases when $x_2\in(0,0.5)$, $Y_{2,3}\le Y_{2,3}(x_2=0.5)=\displaystyle{\frac{Y_2-Y_3}{Y_1}<1}$.

(\rmnum{2})If $Y_1>2Y_4$, $Y_{2,3}$ monotone increases when $x_2\in(0,0.5)$, $Y_{2,3}\le Y_{2,3}(x_2=0.5)=\displaystyle{\frac{Y_2-Y_3}{Y_1}<1}$.

\begin{figure}
    \centering
{
	\begin{minipage}{ 0.3 \linewidth}
	\centering          
	\includegraphics[width=\linewidth]{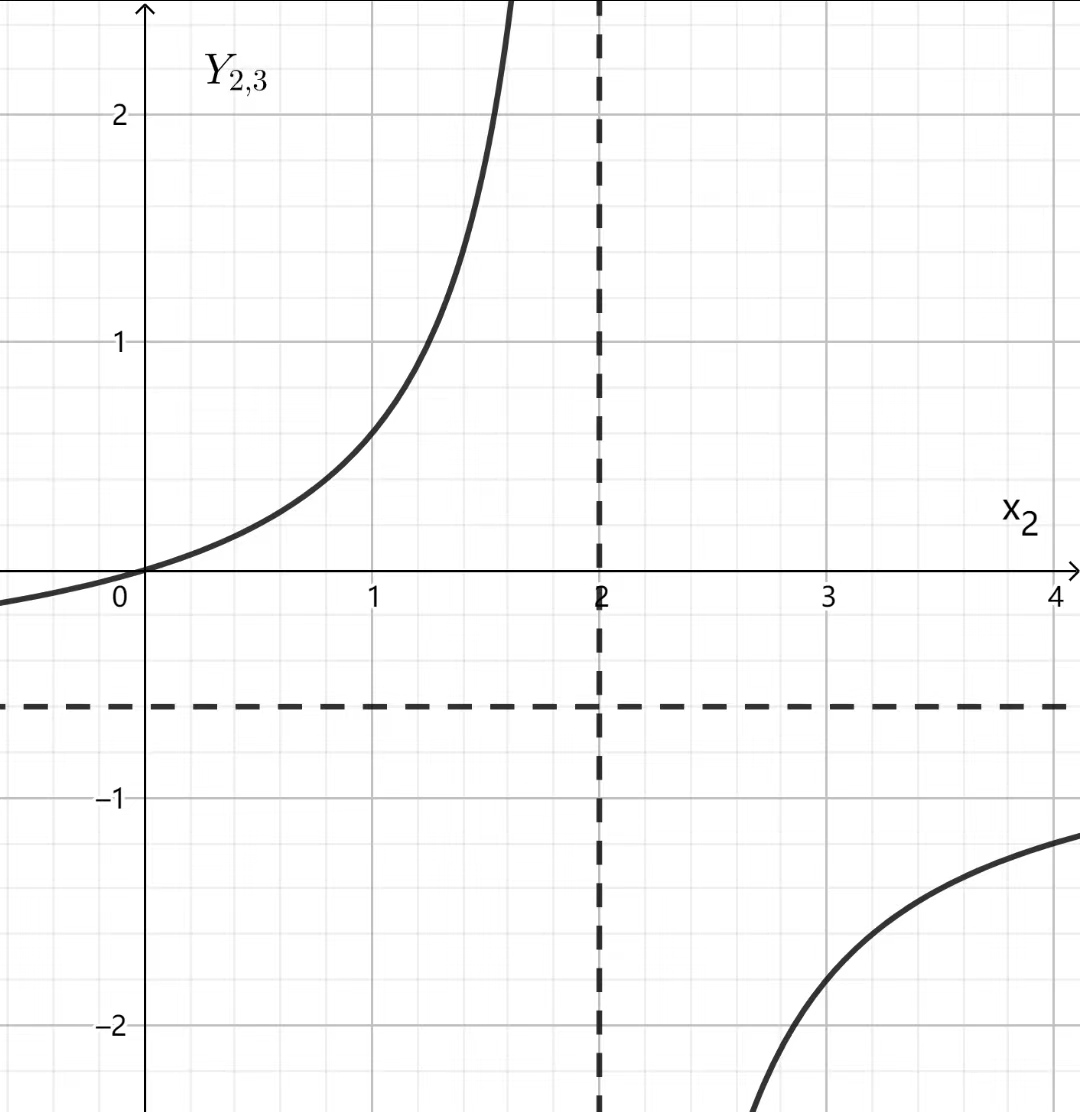}
        \caption{$Y_1<2Y_4$}
	\end{minipage}
}
{
	\begin{minipage}{ 0.3 \linewidth}
	\centering      
	\includegraphics[width=\linewidth]{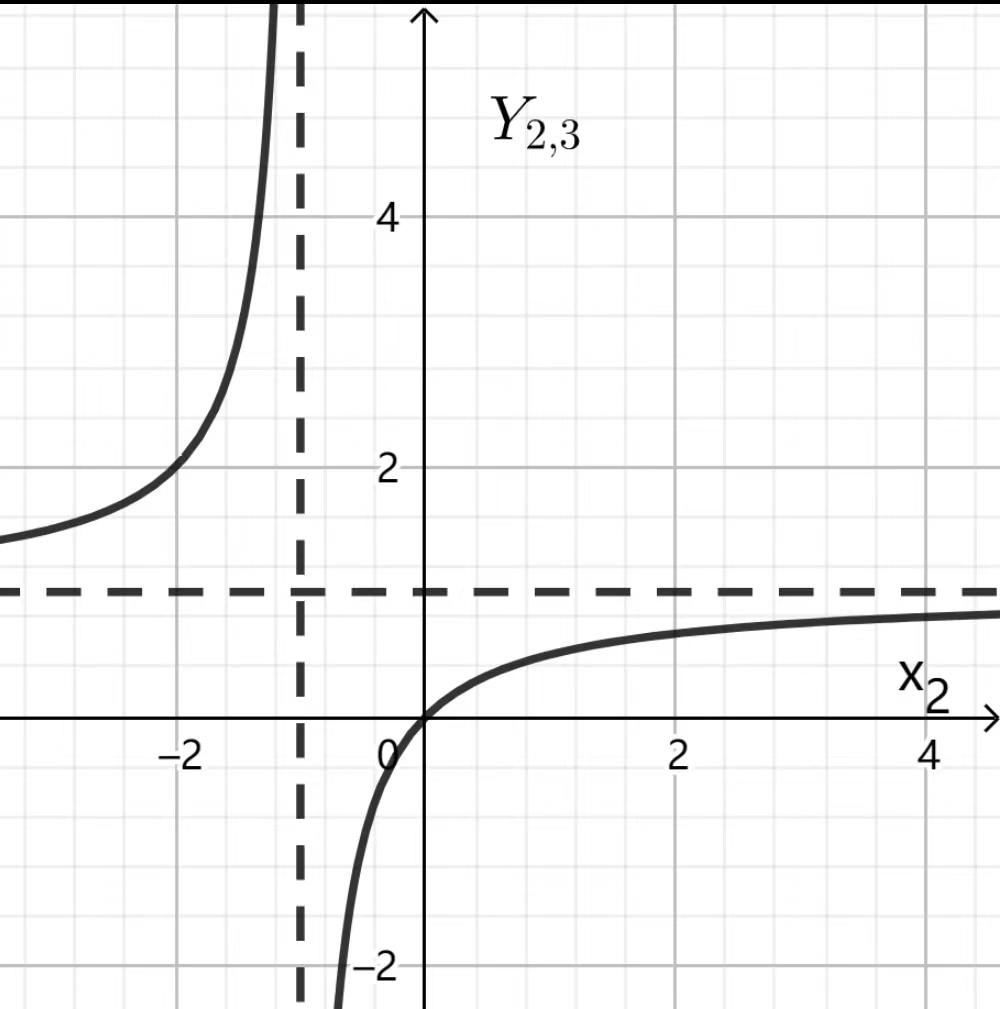}
        \caption{$Y_1>2Y_4$}
	\end{minipage}
}

    \label{fig:y_23}
\end{figure}

We know $Y_{1,2}=\displaystyle{\frac{Y_3-Y_4}{Y_1-Y_2}\in(0,\infty)}$, we must be able to find suitable $Y_{1,2}$ satisfying the conditions of 6 kinds of Nash equilibrium shown in Table~\ref{tab:table1},\ref{tab:table2},\ref{tab:table3}.
For specific $\mathbf{v}=(v_1,v_2,v_3,v_4,v_5)$, we may not find suitable $Y_{2,3}$ satisfying all the conditions of Nash equilibrium because of $Y_{2,3}<1$.
In other words, not all Nash equilibrium listed in Table~\ref{tab:table1},\ref{tab:table2},\ref{tab:table3}.
Additionally, by the partial derivative of $V_{2,3}, V_{1,2}$, we can easily conclude that none of the group of 6 conditions of one specific Nash equilibrium conflicts.
For example, $V_{1,2}(v_1;b_3';b_2',v_4)$ may be bigger than $V_{1,2}(v_3;b_2';v_1,v_4)$ and $V_{1,2}(v_2;b_3';v_1,v_4)$ may be bigger than $V_{1,2}(v_3;b_2';v_1,v_4)$ because of the partial derivative of $V_{1,2}$, so there is no confliction between
$V_{1,2}(v_1;b_3';b_2',v_4)>Y_{1,2},V_{1,2}(v_3;b_2';v_1,v_4)>Y_{1,2},V_{1,2}(v_3;b_2';v_1,v_4)<Y_{1,2}$.\\

\section{Remarks and Future Work}

Our analysis for the DSIC condition of the two-round auction in the Theorem \ref{lemma:two round dsic} still holds if there are pre-defined reserve prices for both rounds.
Meanwhile, one may extend the analysis to multi-round auctions by repeatedly treating the last two rounds as a separate mechanism.
On the other hand, the key observation from our Nash equilibria analysis is that monotone allocation brings greater expected utility for risky bidders while their negative utility can be balanced by the incremental probability to be selected for the second round.
As a result, the risky bidding can increase the revenue of the auctioneer in Nash equilibria of multi-round auctions, although not all of them exist.
As a result, the risky bidding can increase the revenue of the auctioneer in Nash equilibria of multi-round auctions, although not all of them exist.
We conjecture that the greater number of bidders with positive probabilities of being  selected to the second round, the better revenue performance an auction mechanism has, although the more difficult characterizing all Nash equilibria will be.



\bibliographystyle{plain}

\end{document}